\title[On the Robustness of Data-Driven Controllers for Linear
Systems]{On the Robustness of Data-Driven Controllers for Linear
Systems}
\newcommand{\R}{\mathbb{R}}
\newcommand{\transpose}{\mathsf{T}} %or \top or \intercal
\newcommand{\real}{\mathbb{R}}
\newcommand{\bP}{\mathbb{P}}
\newcommand{\wAcl}{\widetilde A_\mathrm{cl}}
\newcommand{\Acl}{A_\mathrm{cl}}
\newcommand{\tr}{\mathrm{tr}}
\newcommand{\wt}{\widetilde}
\newcommand{\ol}{\overline}
\newcommand{\ul}{\underline}
\newcommand{\setdef}[2]{\{#1 \; : \; #2\}}
\newcommand{\map}[3]{#1: #2 \rightarrow #3}
\newcommand{\subscr}[2]{{#1}_{\textup{#2}}}
\newcommand{\supscr}[2]{{#1}^{\textup{#2}}}
\newcommand\oprocendsymbol{\hbox{$\square$}}
\newcommand\oprocend{\relax\ifmmode\else\unskip\hfill\fi\oprocendsymbol}
\author{%
  \Name{Rajasekhar Anguluri} \Email{ranguluri@engr.ucr.edu}\\
  \Name{Abed AlRahman Al Makdah} \Email{aalmakdah@engr.ucr.edu}\\
  \Name{Vaibhav Katewa} \Email{vkatewa@engr.ucr.edu}\\
  \Name{Fabio Pasqualetti} \Email{fabiopas@engr.ucr.edu}\\[.3em]
  \addr Department of Mechanical Engineering, University of
  California at Riverside, Riverside, CA, 92507, USA.%
}
\begin{document}

\maketitle

\begin{abstract}
  This paper proposes a new framework and several results to quantify
  the performance of data-driven state-feedback controllers for linear
  systems against targeted perturbations of the training data. We
  focus on the case where subsets of the training data are randomly
  corrupted by an adversary, and derive lower and upper bounds for the
  stability of the closed-loop system with compromised controller as a
  function of the perturbation statistics, size of the training data,
  sensitivity of the data-driven algorithm to perturbation of the
  training data, and properties of the nominal closed-loop system. Our
  stability and convergence bounds are probabilistic in nature, and
  rely on a first-order approximation of the data-driven procedure
  that designs the state-feedback controller, which can be computed
  directly using the training data. We illustrate our findings via
  multiple numerical studies.
\end{abstract}

% \begin{abstract}
%   In this paper, we study what characteristics does a data-driven
%   feedback controller should posses in order to be robust to
%   uncertainties in the experimental data. In particular, we assume a
%   linear system being stabilized by data-driven static feedback
%   controller, and assume that the measurements are corrupted by random
%   noise or a malicious adversary, in order to destabilize the
%   system. As the corrupted feedback controller results in a random
%   linear system, our results are stated in terms of probability of
%   instability. Using some recent advances in random matrix theory, we
%   show that (i) the rate of convergence of probability of instability
%   towards one is in the order of square root of number of perturbed
%   data points, (ii) the controller can ensure stability with respect
%   to sparse perturbations in the experimental data, provided it does
%   not depend too much on all the data.
% \end{abstract}

%\begin{itemize}
%	\item Limit the main text (not counting references) to 6 PMLR-formatted pages, using this template.
%	\item Include {\em in the main text} enough details, including proof details, to convince the reviewers of the contribution, novelty and significance of the submissions.
%\end{itemize}

\smallskip
\begin{keywords}%
  Data-driven control, robustness, stochastic perturbation, random
  matrix, linear system.  %
\end{keywords}

% \fpmargin{Add paper ``Improving DNN Robustness to Adversarial Attacks
%   using Jacobian Regularization'' to conclusion: our results provide
%   guidelines on how to improve Jacobians}

\section{Introduction}
Data-driven algorithms are becoming increasingly more popular to solve
a variety of engineering problems, ranging from computer vision and
speech recognition to the design of stabilizing controllers for
dynamical systems (e.g., see \cite{PT-WLM-JG-ADA:17,
  BR:18}). While providing competitive performance under
nominal operating conditions and accurate data, these data-driven
methods typically offer no robustness guarantees against accidental or
adversarial manipulation of the training data, as demonstrated by
unfortunate incidents \citep{KP-MPM-DB-EB:18} and early studies
\citep{CDP-PT:19,SD-NM-BR-VY:19, AALM-VK-FP:19,AALM-VK-FP:19a}. This creates
concerns and poses critical limitations on the deployment of
data-driven control algorithms for practical problems.

In this paper we propose a novel framework and certain bounds to
characterize the robustness of data-driven state-feedback controllers
against perturbation of the training data. In particular, we view a
data-driven algorithm to design a stabilizing state-feedback
controller as a (differentiable) map from the collected data to the
space of controllers. Then, we compute a first-order approximation of
such map, which inherently measures the sensitivity of the data-driven
control algorithm to perturbations of its input data, and use it to
derive lower and upper bounds for the stability of the closed-loop
system with the controller obtained from the perturbed data. Our
stability results are probabilistic in nature, and they explicitly
depend upon the statistics of the perturbation, the size of the
training data of the data-driven algorithm, the sensitivity of the
data-driven algorithm, and the spectral properties of the nominal
closed-loop dynamics. Our results can be used to provide stability
guarantees for data-driven controllers, as well as to compare the
effectiveness of different data-driven procedures. Finally, we
illustrate our findings through a numerical example.

We will make use of the following notation. The cardinality of a set
$S$ is denoted by
$|S|$.  The spectral radius and trace of a square matrix are denoted
by $\rho(\cdot)$ and $\tr(\cdot)$, resp. The symbol
$\|\cdot\|$ denotes the Euclidean norm. The operators
vec($\cdot$) and
vec$^{-1}(\cdot)$ denote the vectorization and inverse vectorization
of a matrix and a vector, respectively.  The probability of an event
is denoted by
$\mathbb{P}(\cdot)$, and the expectation of a random variable is
represented by $\mathbb{E}[\cdot]$.
$\mathcal{N}(0,\Sigma)$ denotes a zero-mean Gaussian distribution with
covariance
$\Sigma$. The complementary cumulative distribution function (CDF) of
the standard normal distribution and the error function are denoted by
$\mathbb{Q}(\cdot)$ and $\text{erf}(\cdot)$.

\section{Problem Setup}\label{sec: problem_setup}
We consider a discrete-time linear time-invariant system given by
\begin{align}\label{eq: nominal_system}
  x(t + 1) &= A x(t) + B u (t), \qquad t\geq 0,
\end{align}
where $x\in \mathbb{R}^n$, $u \in \R^{m}$, $A \in \real^{n\times n}$, and
$B \in \real^{n \times m}$ denote, respectively, the state,
the input, the system matrix, and the input matrix. We assume that the system matrices $A$ and $B$ are unknown, and that a set of control experiments have been conducted to generate training data consisting of pairs of input sequences and samples of the system trajectories. Specifically, the training data is 
\begin{align}\label{eq: available_data}
  U = 
  \begin{bmatrix}
    u_1 & \cdots & u_N 
  \end{bmatrix} 
                   \in \R^{mT\times N}, 
                   \quad \text{ and } \quad 
                   X = C
                   \begin{bmatrix}
                     x_1 & \cdots & x_N
                   \end{bmatrix} \in \R^{p\times N}, 
\end{align}
where $N \in \mathbb{N}$ and $T \in \mathbb{N}$ denote the number and
length of the control experiments, respectively, $u_i$ and $x_i$ the
input and state trajectory of the $i$-th experiment, and
$C \in \real^{p \times nT}$ identifies the samples of the state
trajectories measured during each experiment. For instance, if
$C = [0 \; \cdots \; 0 \; I]$, then only the state at time $T$ is
measured during each experiment, as in \cite{GB-VK-FP:19}. Similarly,
if $C = I$, then the whole trajectory is measured as, for instance, in
\cite{CDP-PT:19}.

We assume that a static state-feedback data-driven controller $u= K x$
is used to stabilize the system \eqref{eq: nominal_system}, where
$K = F (U, X)$ and
$\map{F}{\real^{mT\times N} \times \real^{p\times N}}{\real^{m \times
    n}}$. The map $F$ denotes an arbitrary data-driven algorithm to
compute stabilizing controllers, such as the procedures described in
\cite{CDP-PT:19} and \cite{APV-AKS-FLL:19}. We make the following
assumptions:
\begin{itemize} 

\item[(A1)] The controller $K = F( U,X)$ stabilizes
  \eqref{eq: nominal_system}, that is, $\rho(A+BK)<1$.

\item[(A2)] The closed-loop matrix $A_{\text{cl}} = A+BK$ is diagonalizable.

\item[(A3)] The map $F (U, X)$ is Fr\'echet differentiable with
  respect to $X$, and admits a first-order Taylor expansion. Formally,
  for any $Z \in \R^{p \times N}$ the data-driven map satisfies
\begin{align}\label{eq: frechet_derivative}
\mathrm{vec}(F (U, X+Z)) = \mathrm{vec}(F( U, X)) + J_{X}( U, X) \mathrm{vec}(Z)+R(U, X, Z), 
\end{align}
with $\underset{\|Z\|\to 0}{ \lim}\frac{\|R(U,X,Z)\|}{\| Z\|}=0$,
where $J_X$ is the Jacobian matrix consisting of partial derivatives.

\end{itemize}
We remark that Assumption (A1) requires the data-driven algorithm to
stabilize the system with nominal data. Instead, Assumption (A2) is
convenient for the analysis and is not restrictive. Finally,
Assumption (A3) is a working assumption of this paper and is typically
used in similar studies.

% We leave the analysis of non-differentiable data-driven maps as the
% subject of future research.

% We remark that the above technical assumptions are not restrictive, as we show in Section \ref{sec: examples}.

The main objective of this paper is to quantify the robustness of the
data-driven controller $K = F(U, X)$ to perturbations of the
experimental data, which can be due, for instance, to measurement
noise or targeted adversarial manipulation of the data collection
sensors.\footnote{We focus on perturbations of $X$ only, although our
  methods can be extended to perturbations affecting both $U$
  and~$X$.} To this aim, let $\widetilde X = X + Z$ denote the data
perturbed by the zero-mean random noise $Z$. Let $\mathrm{supp}(Z)$
denote the set of compromised entries of $X$, that is,
$\mathrm{supp}(Z) = \setdef{i}{z_i \neq 0}$, with $z_i$ the $i$-th
component of $\mathrm{vec}(Z)$. Then, the main objective of this paper
is to characterize whether the perturbed closed-loop matrix
$\subscr{\widetilde A}{cl} = A + B \widetilde K$ is stable as a
function of the perturbation statistics, where
$\widetilde K = F(U, \widetilde X)$ denotes the data-driven controller
computed with the perturbed data. Notice that $Z$ is a random matrix,
and so are $\widetilde X$, $\widetilde K$, and
$\subscr{\widetilde A}{cl}$. Thus, the stability of
$\subscr{\widetilde A}{cl}$ will be studied in a probabilistic
framework.

\section{Robustness results for data-driven state-feedback
  controllers}\label{sec: robustness}
In this section we study the stability properties of the perturbed
closed-loop system $\subscr{\widetilde A}{cl} = A + B \widetilde
K$. In particular, we provide bounds for
$\mathbb{P}[\rho (\subscr{\widetilde A}{cl})\geq 1]$, which is
a well-defined random variable (see \cite[p.~85]{BA:73}) and quantifies
the probability that the closed loop system
$\subscr{\widetilde A}{cl}$ is unstable. % (see
% \cite[p.~85]{BA:73})
% the spectral radius
% $\rho (\subscr{\widetilde A}{cl})$, which is a well-defined random
% variable (see \cite[p.~85]{BA:73}), is greater than one.
We start with the following instrumental result to approximate the
matrix $\subscr{\widetilde A}{cl}$.

\begin{lemma}{\bf\emph{(First-order approximation of
      $\wAcl$)}}\label{prop: taylor_approximation}
  % Let $Z$ denote the data perturbation.
  Let $\supscr{J}{v}_i$ denote the $i$-th column of $J_{X}( U,X)$ in
  \eqref{eq: frechet_derivative}, and let
  $J_i = \mathrm{vec}^{-1}(\supscr{J}{v}_i)$. Then, for any $\tau>0$,
  the perturbed closed-loop matrix satisfies
  \begin{align*}
    \lim_{\mathbb{E}[\|\mathrm{vec}(Z)\|] \rightarrow 0}
    \mathbb{P}\left[ \left\Vert \wt{A}_{\mathrm{cl}}- A_{\mathrm{cl}}
    - \textstyle \sum_{\mathrm{supp}(Z)} z_iB J_i \right\Vert \geq \tau
    \sqrt{\mathbb{E}[\| \mathrm{vec}(Z)\|]}\right] \to 0.
  \end{align*}
\end{lemma}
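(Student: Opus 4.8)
The plan is to use Assumption (A3) to turn the left-hand side into the higher-order remainder $R$ exactly, and then combine a Markov-type tail bound on $Z$ with the sublinearity of $R$, trading both off against the normalization $\sqrt{\mathbb{E}[\|\mathrm{vec}(Z)\|]}$.

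First I would derive the algebraic identity underlying the statement. Vectorizing $\wt K = F(U,X+Z)$ via \eqref{eq: frechet_derivative} gives $\mathrm{vec}(\wt K) = \mathrm{vec}(K) + J_X(U,X)\,\mathrm{vec}(Z) + R(U,X,Z)$. Writing $J_X(U,X)\,\mathrm{vec}(Z) = \sum_i z_i \supscr{J}{v}_i$ and noting that only indices in $\mathrm{supp}(Z)$ contribute, applying $\mathrm{vec}^{-1}$ yields $\wt K - K = \sum_{\mathrm{supp}(Z)} z_i J_i + \mathrm{vec}^{-1}(R)$. Left-multiplying by $B$ and using $\wAcl - \Acl = B(\wt K - K)$ gives the exact identity $\wAcl - \Acl - \sum_{\mathrm{supp}(Z)} z_i B J_i = B\,\mathrm{vec}^{-1}(R)$, hence $\bigl\| \wAcl - \Acl - \sum_{\mathrm{supp}(Z)} z_i B J_i \bigr\| \le \|B\|\,\|R(U,X,Z)\|$ (passing between Frobenius and operator norms only introduces dimensional constants, irrelevant to the limit).

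Next, set $\mu = \mathbb{E}[\|\mathrm{vec}(Z)\|]$ and split according to whether $\|\mathrm{vec}(Z)\|$ exceeds the threshold $\delta = \sqrt{\mu}$. On the event $\{\|\mathrm{vec}(Z)\| > \delta\}$, Markov's inequality bounds the probability by $\mu/\delta = \sqrt{\mu}\to 0$. On the complement I would invoke the sublinear-remainder property: letting $h(\delta) = \sup\{\|R(U,X,Z)\|/\|Z\| \,:\, 0<\|Z\|\le\delta\}$, Assumption (A3) forces $h(\delta)\to 0$ as $\delta\to 0$, and on $\{\|\mathrm{vec}(Z)\|\le\delta\}$ we have $\|R\| \le h(\delta)\,\|\mathrm{vec}(Z)\| \le h(\delta)\,\delta = h(\sqrt\mu)\,\sqrt\mu$. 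Thus $\|B\|\,\|R\| \ge \tau\sqrt\mu$ is impossible on this event once $\|B\|\,h(\sqrt\mu) < \tau$, i.e.\ for $\mu$ small enough. Combining, the probability in the statement is at most $\sqrt\mu + \mathbf{1}\{\|B\|\,h(\sqrt\mu)\ge\tau\}$, which tends to $0$ as $\mu\to 0$, proving the claim.

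The algebraic identity and the Markov step are routine; the one point requiring care is the middle step, namely upgrading the pointwise limit $\|R\|/\|Z\|\to 0$ to a bound uniform over the small-$\|Z\|$ regime. This is precisely why $h(\delta)$ and the choice $\delta = \sqrt{\mathbb{E}[\|\mathrm{vec}(Z)\|]}$ are introduced: it lets the vanishing ``large $\|Z\|$'' probability and the vanishing uniform bound $h(\sqrt\mu)$ be balanced against the normalization. A naive alternative—bounding $\mathbb{E}[\|R\|]$ directly and applying Markov to $\|B\|\,\|R\|$—would require integrability of $R$ on the tail of $Z$, which is not assumed here; the split-by-event argument sidesteps this.
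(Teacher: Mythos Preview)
Your proof is correct and follows the same overall architecture as the paper's: both first reduce the quantity inside the probability to $B\,\mathrm{vec}^{-1}(R(U,X,Z))$ via the Taylor expansion \eqref{eq: frechet_derivative}, and then argue that this remainder is negligible at the scale $\sqrt{\mathbb{E}[\|\mathrm{vec}(Z)\|]}$.

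The only difference is in the second step. The paper dispatches it in one line by citing an external result (Theorem~3.1.1 in Kollo--von~Rosen), whereas you supply a self-contained argument: split on $\{\|\mathrm{vec}(Z)\|>\sqrt{\mu}\}$, control the tail by Markov, and control the bulk via the uniform modulus $h(\delta)=\sup_{0<\|Z\|\le\delta}\|R\|/\|Z\|$. Your route is more elementary and makes explicit exactly why the normalization $\sqrt{\mathbb{E}[\|\mathrm{vec}(Z)\|]}$ is the right one (it simultaneously drives the Markov tail $\mu/\delta=\sqrt{\mu}$ and the bulk bound $h(\sqrt{\mu})\sqrt{\mu}$ to zero). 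The paper's route is shorter but opaque unless one knows the cited theorem. Substantively, though, your argument is essentially a proof of the special case of that theorem being invoked, so the two approaches coincide.
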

\begin{proof}
  From \eqref{eq: frechet_derivative} we have
  $\wt{K} = K + \textstyle\sum_{i=1}^{d} z_i
  \mathrm{vec}^{-1}(\supscr{J}{v}_i)+\mathrm{vec}^{-1}(R)$. Since
  $\ \wt{A}_{\mathrm{cl}} = A_{\mathrm{cl}}+B\widetilde K$, it now
  follows that
  $B\mathrm{vec}^{-1}(R)= \wt{A}_{\mathrm{cl}}- A_{\mathrm{cl}} +
  \textstyle \sum_{i=1}^{d} z_iB J_i$. By invoking
  \citep[Theorem.~3.1.1]{TK-DV:05}, we note that
  $\mathbb{P} \left[ \|B\mathrm{vec}^{-1}(R)\| \geq \tau\sqrt{
      \mathbb{E}[\|\mathrm{vec}(Z)\|]}\right]\to 0$ as
$\mathbb{E}[\|\mathrm{vec}(Z)\|] \to 0$.
\end{proof}

Lemma \ref{prop: taylor_approximation} states that, if the expected
norm of the perturbation $Z$ is sufficiently small, then $\wAcl$ can
be well approximated as
$A_{\mathrm{cl}} + \textstyle \sum_{\mathrm{supp}(Z)} z_iB J_i$. Thus, in what
follows we let
\begin{align}\label{eq: perturbed_system}
	\wt{A}_{\mathrm{cl}}=A_{\mathrm{cl}} + \textstyle
  \sum_{\mathrm{supp}(Z)} z_iB J_i .
\end{align}
The right hand term in \eqref{eq: perturbed_system} captures the
effect of each perturbation entry of $Z$ on the nominal system $\Acl$
in an additive form. In particular, the matrix $J_i$, consisting of
partial derivatives of the data-driven control map $F(U, X)$ with
respect to $X$, captures the sensitivity of the data-driven controller
to variations of the $i$-th component of $\mathrm{vec}(X)$. Also, the
specific form of the perturbation matrix allows us to capture the
effect of a particular subset of the data on the controller's
performance, since $z_i = 0$ if $i \not\in \mathrm{supp}(Z)$. Using
\eqref{eq: perturbed_system}, we now present bounds on the stability
of the closed-loop system for the case of normally distributed
perturbations. Our bounds make use of recent concentration inequality
results for the sum of random matrices \citep{JAT:15,SB-GL-PS:13}.

%In fact for the case of Frobenius norm, authors in \citep{SG-DTP:14} provided a mechanism for estimating how large $\mathbb{E}[\|\mathrm{vec}(Z)\|]$ should be to control the probability.

% {\color{red}The summations should be over $\mathrm{supp} (Z)$ in Lemma
%   and Theorem, not d (d should be removed, it is not informative)}

% {\color{red}Since only the entry in $\mathrm{supp} (Z)$ are nonzero, is
%   it correct to say that
%   $\mathrm{vec}(Z)=[z_1,\ldots,z_d]^\transpose\sim
%   \mathcal{N}(0,\Sigma)$? Maybe we should only say $z_i \sim
%   \mathcal{N}(0,\sigma_i)$ for $i \in \mathrm{supp} (Z)$?}

\begin{theorem}{\bf\emph{(Probabilistic bounds on the stability of
      $\wt{A}_{\mathrm{cl}}$)}}\label{thm: bound_probability_gaussian}
  Let $z_i \sim \mathcal{N}(0,\sigma_i^2)$, with $i \in
  \mathrm{supp}(Z)$. 
  % Let
  % $\mathrm{vec}(Z)=[z_1,\ldots,z_d]^\transpose\sim
  % \mathcal{N}(0,\Sigma)$ denote the data perturbation, where
  % $\Sigma=\text{diag}(\sigma_1^2,\ldots,\sigma_d^2)$ and $d=pN$. 
  Let
  $J_i$ be as in Lemma \ref{prop: taylor_approximation}, and
  define the following parameters:
  \begin{align*} 
    \overline v
    =\mathrm{max}\left\lbrace
       \left\| \sum\limits_{\mathrm{supp}(Z)} \sigma_i^2BJ_i
       \left(BJ_i\right)^\transpose\right\| ,
       \left\Vert \sum\limits_{\mathrm{supp}(Z)}\sigma^2_i\left(BJ_i\right)^\transpose
       BJ_i\right\Vert\right\rbrace, \text{ and } \underline
       v =\textstyle \sum\limits_{\mathrm{supp}(Z)}\sigma^2_i\left[\tr(BJ_i)\right]^2.
  \end{align*} 
  Let $\kappa=\|A_\mathrm{cl}\|\|A_\mathrm{cl}^{-1}\|$ be the condition number of $A_{\mathrm{cl}}$, and let
  $\mu=\tr(A_{\mathrm{cl}})$. Then,
  \begin{align}\label{eq: probability_bounds}
    \mathbb{Q}\left(\frac{n+\mu}{\sqrt{\ul
          v}}\right)+\mathbb{Q}\left(\frac{n-\mu}{\sqrt{\ul v}}\right)
    \leq \mathbb{P}\left[ \rho(\wt{A}_{\mathrm{cl}})\geq 1\right] \leq
    2n\exp\left( \frac{-\left(1-\rho(A_{\mathrm{cl}})\right)^2}{(2\ol v)
        \kappa^2 }\right). 
  \end{align} 
\end{theorem}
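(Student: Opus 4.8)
The plan is to sandwich $\mathbb{P}[\rho(\wAcl)\ge 1]$ between two quantities by reducing the spectral-radius event to tractable Gaussian tail events, working throughout with the first-order model $\wAcl = \Acl + \Delta$, $\Delta = \sum_{\mathrm{supp}(Z)} z_i B J_i$, from \eqref{eq: perturbed_system}. The key structural fact is that, with the $z_i$ independent $\mathcal{N}(0,\sigma_i^2)$ and the $B J_i$ deterministic, $\Delta$ is a Gaussian matrix series while $\tr(\Delta)$ is a scalar (real) Gaussian; the lower bound will come from the latter and the upper bound from the former.

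For the lower bound I would use the elementary inequality $\rho(M)\ge |\tr(M)|/n$, which holds for every $M\in\mathbb{R}^{n\times n}$ since $|\tr(M)| = |\sum_j\lambda_j(M)|\le\sum_j|\lambda_j(M)|\le n\,\rho(M)$. Consequently $\{\,|\tr(\wAcl)|\ge n\,\}\subseteq\{\,\rho(\wAcl)\ge 1\,\}$, so it suffices to lower bound $\mathbb{P}[\,|\tr(\wAcl)|\ge n\,]$. By linearity of the trace, $\tr(\wAcl)=\mu+\sum_{\mathrm{supp}(Z)} z_i\tr(B J_i)$ is Gaussian with mean $\mu=\tr(\Acl)$ and variance $\ul v=\sum_{\mathrm{supp}(Z)}\sigma_i^2[\tr(B J_i)]^2$. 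Splitting the disjoint union $\{|\tr(\wAcl)|\ge n\} = \{\tr(\wAcl)\ge n\}\cup\{\tr(\wAcl)\le -n\}$ and standardizing each piece turns the two probabilities into $\mathbb{Q}((n-\mu)/\sqrt{\ul v})$ and $\mathbb{Q}((n+\mu)/\sqrt{\ul v})$, whose sum is the claimed lower bound. Note this argument uses neither (A2) nor any matrix concentration result.

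For the upper bound I would invoke Assumption (A2): since $\Acl$ is diagonalizable, a Bauer--Fike-type perturbation estimate yields $\rho(\wAcl)\le\rho(\Acl)+\kappa\,\|\Delta\|$, so that $\rho(\wAcl)\ge 1$ forces $\|\Delta\|\ge(1-\rho(\Acl))/\kappa$, giving $\mathbb{P}[\rho(\wAcl)\ge 1]\le\mathbb{P}[\,\|\Delta\|\ge(1-\rho(\Acl))/\kappa\,]$. Since $\Delta=\sum_{\mathrm{supp}(Z)}\gamma_i\,(\sigma_i B J_i)$ with $\gamma_i$ i.i.d.\ standard normal, I would then apply the matrix Gaussian series tail bound of \citep{JAT:15} (see also \citep{SB-GL-PS:13}), which for $n\times n$ summands gives $\mathbb{P}[\|\Delta\|\ge t]\le 2n\exp(-t^2/(2\ol v))$ with matrix-variance proxy exactly $\ol v=\max\{\,\|\sum\sigma_i^2 (B J_i)(B J_i)^\transpose\|,\ \|\sum\sigma_i^2 (B J_i)^\transpose(B J_i)\|\,\}$. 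Substituting $t=(1-\rho(\Acl))/\kappa$ produces the right-hand side of \eqref{eq: probability_bounds}.

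The main obstacle is the first step of the upper bound: passing from $\{\rho(\wAcl)\ge 1\}$ to a norm bound on $\Delta$ with the stated constant $\kappa$ requires care about which conditioning quantity genuinely appears (the diagonalizing similarity versus $\|\Acl\|\|\Acl^{-1}\|$) and about using the spectral norm of $\Delta$ on the right rather than a similarity-transformed norm; the diagonalizability assumption is precisely what makes a clean estimate possible. A secondary bookkeeping point is getting the matrix-concentration constants right for non-Hermitian $n\times n$ Gaussian summands --- identifying $\ol v$ as the correct variance statistic and the dimensional prefactor $2n$ (via the Hermitian dilation). By comparison the lower bound is essentially self-contained once the trace inequality is noted.
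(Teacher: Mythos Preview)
Your proposal is correct and follows essentially the same approach as the paper: the lower bound via the trace inequality $\rho(M)\ge |\tr(M)|/n$ and the resulting folded-normal tail, and the upper bound via Bauer--Fike combined with Tropp's matrix Gaussian series tail inequality. The concern you flag about which conditioning constant legitimately appears in Bauer--Fike is apt, but the paper proceeds exactly as you do, invoking Bauer--Fike for $\rho(\wAcl)\le\rho(\Acl)+\kappa\|\Delta\|$ with $\kappa=\|\Acl\|\|\Acl^{-1}\|$ without further comment.
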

\begin{proof}
  Let $\Delta=\sum_{\mathrm{supp} (Z)} z_iBJ_i$. From \eqref{eq: perturbed_system} we have the following estimate:
  \begin{align}\label{eq: spectral_radius_bounds}
    n^{-1}|\mathrm{tr}(\wAcl)| \leq \rho(\wAcl)\leq \rho(\Acl) +\kappa\|\Delta\|.
  \end{align}
  The second inequality in \eqref{eq: spectral_radius_bounds}
  follows from the Bauer-Fike Theorem \cite[Chapter~4]{GWS-S:90}.
  Instead, the first inequality is trivially obtained using the
  triangle inequality.

  \textit{(Upper bound)} Let $t=(1-\rho(A_\text{cl}))/\kappa$ and
  $\widetilde \Delta=\sum_{\mathrm{supp} (Z)}\tilde z_i(\sigma_iBJ_i)$,
  where $\tilde z_i$ are independent and identically distributed
  random variables with zero mean and unit variance, which are also
  independent of the perturbation variables $z_i$. Notice the
  following chain of inequalities:
  \begin{align*}%\label{eq: upper_bound_proof}
    \bP[\rho(\widetilde A_\text{cl})\geq 1] \leq \bP\left[ \rho(\Acl)+\kappa\|\Delta\|\geq 1 \right] = \bP\left[\|\Delta\|\geq t \right]=\bP\left[\|\widetilde \Delta\|\geq t \right]\leq (2n) \exp({-t^2}/{2\ol v}).
  \end{align*}
  The first inequality follows by invoking monotonicity of probabilities on the set inclusion $\{\rho(\widetilde A_\text{cl})\geq 1\} \subseteq \left\lbrace
  \rho(\Acl) +\kappa\|\Delta\|\geq 1\right\rbrace$. The second equality follows from the fact that the random matrices $\Delta$ and $\widetilde \Delta$ are equal in distribution. The last inequality follows from \cite[Theorem.~4.1.1]{JAT:15}. 

\textit{(Lower bound)} From \eqref{eq: spectral_radius_bounds},
consider the set inclusion
$\{|\tr(\wAcl)|\geq n\}\subseteq \{\rho(\widetilde A_\text{cl})\geq
1\}$, which implies
$\bP[|\tr(\wAcl)|\geq n]\leq \mathbb{P}[ \rho(\wAcl)\geq 1]$. Further,
from \eqref{eq: perturbed_system} it follows that
$\tr(\widetilde A_\text{cl})=\mu+\sum_\mathrm{supp} (Z)
z_i\tr(BJ_i)$. Since $z_i\sim\mathcal{N}(0,\sigma_i^2)$ and the terms
$\mu$ and $\tr(BJ_i)$ are known scalars, $\tr(\wAcl)$ is distributed
according to $\mathcal{N}(\mu,\ul v)$. Hence, $|\tr(\wAcl)|$ follows a
folded normal distribution \citep{FCL-LSN-RBN:61}, and, by definition,
$\bP[|\tr(\wAcl)|\geq n]=\mathbb{Q}((n+\mu)/\sqrt{\ul
  v})+\mathbb{Q}((n-\mu)/\sqrt{\ul v})$.
\end{proof}

The bounds in Theorem \ref{thm: bound_probability_gaussian} quantify
how different properties of the nominal system dynamics and the data
perturbation affect the stability of the closed-loop dynamics. First,
the variance parameters $\ol v$ and $\ul v$ depend on the variance of
the perturbation ($\sigma_i$), the number of perturbed entries
($\mathrm{supp} (Z)$), and the sensitivity of the data-driven control
algorithm, as captured by the Jacobian matrices $J_i$. In particular,
when the variance of the perturbation grows and the other quantities
remain bounded, $\ul v$ grows to infinity and the lower bound in
\eqref{eq: probability_bounds} converges to $1$, since
$\mathbb{Q}(\cdot)$ converges to $0.5$. As intuitively expected, the
probability of having a stabilizing controller decreases to zero for
perturbations of increasing variance. Conversely, when the variance of
the perturbation, the number of perturbed experiments, or the
sensitivity of the data-driven algorithm converge to zero, then,
assuming the other quantities remain bounded, $\ol v$ decreases to
zero and the upper bound in \eqref{eq: probability_bounds} converges
to zero. This shows that the closed-loop system is stable with
probability growing to one when the effect of the perturbation on the
data-driven controller decreases to zero. Second, the eigenvalues and
the non-normality degree of the nominal closed-loop system, as
measured by the condition number $\kappa$ \citep{LNT-ME:05}, also
affect the performance of the data-driven controller. Specifically,
the upper bound in \eqref{eq: probability_bounds} grows with the
condition number $\kappa$, as expected since the sensitivity of the
eigenvalues of a matrix increases with its condition number
\citep{LNT-ME:05}, and with the spectral radius $\rho(\Acl)$, since
matrices with eigenvalues closer to the unit circle require smaller
perturbations to become unstable. Similarly, the lower bound in
\eqref{eq: probability_bounds} is also increasing with respect to
$|\mu|$, thus yielding a larger lower bound for nominal systems that
are closer to instability. Finally, Theorem \ref{thm:
  bound_probability_gaussian} can be used to characterize the rate at
which the probability of instability of the closed-loop system grows
as a function of the number of perturbed experiments, as we show next.

\begin{theorem}{\bf\emph{(Convergence rate)}}\label{cor: rate of
    increase}
  Let $i \in \mathrm{supp} (Z)$, and define
  $\mathrm{diag}(BJ_i)=[\gamma_{i}^1,\ldots,\gamma_{i}^n]$,
  $\alpha_i=\mathrm{min}\{\gamma_{i}^1,\ldots,\gamma_{i}^n\}$, and
  $\gamma = \min_i \{\sigma_i\alpha_i \}$. Then,
  $\mathbb{P}[ \rho(\wt{A}_{\mathrm{cl}})\geq 1] >
  2\mathbb{Q}(2/\sqrt{\gamma^2 |\mathrm{supp}(Z)|})$.
\end{theorem}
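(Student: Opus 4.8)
The plan is to piggyback on the lower bound already established in the proof of Theorem~\ref{thm: bound_probability_gaussian} and reduce the claim to elementary monotonicity estimates on the tail function $\mathbb{Q}$. Recall that the left inequality in \eqref{eq: spectral_radius_bounds} gives the set inclusion $\{|\tr(\wAcl)|\ge n\}\subseteq\{\rho(\wAcl)\ge 1\}$, that $\tr(\wAcl)=\mu+\sum_{\mathrm{supp}(Z)}z_i\tr(BJ_i)\sim\mathcal{N}(\mu,\ul v)$, and hence $\mathbb{P}[\rho(\wAcl)\ge 1]\ge\mathbb{P}[|\tr(\wAcl)|\ge n]=\mathbb{Q}\big((n-\mu)/\sqrt{\ul v}\big)+\mathbb{Q}\big((n+\mu)/\sqrt{\ul v}\big)$. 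Thus it suffices to show that this sum of $\mathbb{Q}$-terms is strictly larger than $2\,\mathbb{Q}\big(2/\sqrt{\gamma^2|\mathrm{supp}(Z)|}\big)$, which is a purely scalar inequality.

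First I would symmetrize the two terms: since $\mathbb{Q}$ is strictly decreasing and both $n-\mu$ and $n+\mu$ are at most $n+|\mu|$, we get $\mathbb{Q}\big((n-\mu)/\sqrt{\ul v}\big)+\mathbb{Q}\big((n+\mu)/\sqrt{\ul v}\big)\ge 2\,\mathbb{Q}\big((n+|\mu|)/\sqrt{\ul v}\big)$. Next, Assumption~(A1) yields $\rho(\Acl)<1$, so the eigenvalues $\lambda_k$ of $\Acl$ satisfy $|\mu|=|\tr(\Acl)|\le\sum_{k=1}^n|\lambda_k|<n$, giving the strict bound $n+|\mu|<2n$ on the numerator. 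Finally I would lower-bound $\ul v$: every diagonal entry of $BJ_i$ is at least $\alpha_i$, so $\tr(BJ_i)=\sum_{k=1}^n\gamma_i^k\ge n\alpha_i$, whence $[\tr(BJ_i)]^2\ge n^2\alpha_i^2$, and therefore $\ul v=\sum_{\mathrm{supp}(Z)}\sigma_i^2[\tr(BJ_i)]^2\ge n^2\sum_{\mathrm{supp}(Z)}(\sigma_i\alpha_i)^2\ge n^2\gamma^2|\mathrm{supp}(Z)|$, using $\gamma\le\sigma_i\alpha_i$ for every $i$. Combining these three estimates, $(n+|\mu|)/\sqrt{\ul v}<2n/\sqrt{\ul v}\le 2n/\big(n\gamma\sqrt{|\mathrm{supp}(Z)|}\big)=2/\sqrt{\gamma^2|\mathrm{supp}(Z)|}$, and one more application of the monotonicity of $\mathbb{Q}$ closes the chain and delivers the strict inequality.

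The main obstacle — indeed the only delicate point — is the lower bound on $\ul v$, specifically the passage from $\tr(BJ_i)\ge n\alpha_i$ to $[\tr(BJ_i)]^2\ge n^2\alpha_i^2$: this is immediate when the diagonal entries $\gamma_i^k$ do not straddle zero (so that $|\tr(BJ_i)|\ge n|\alpha_i|$), and would need either this sign convention or a restatement in terms of $|\alpha_i|$ otherwise. Everything else is bookkeeping: the strictness of the final bound comes from $|\mu|<n$ together with the strict monotonicity of $\mathbb{Q}$, and the degenerate case $\gamma=0$ (equivalently $\ul v=0$) is vacuous since then $\mathbb{Q}\big(2/\sqrt{\gamma^2|\mathrm{supp}(Z)|}\big)=0$ while the left-hand side is a probability.
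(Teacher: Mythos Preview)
Your proposal is correct and follows essentially the same route as the paper: start from the lower bound in Theorem~\ref{thm: bound_probability_gaussian}, use $|\mu|<n$ (from $\rho(\Acl)<1$) together with the strict monotonicity of $\mathbb{Q}$ to pass to $2\,\mathbb{Q}(2n/\sqrt{\ul v})$, and then lower-bound $\ul v$ by $n^2\gamma^2|\mathrm{supp}(Z)|$ via $\tr(BJ_i)\ge n\alpha_i$. You are in fact more careful than the paper in flagging the sign assumption needed to square the inequality $\tr(BJ_i)\ge n\alpha_i$; the paper's proof tacitly takes this step for granted.
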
 
\begin{proof}
  Because $|\mu|<n$ and $\mathbb{Q}(\cdot)$ is a monotone function,
  \eqref{eq: probability_bounds} implies that
  $2\mathbb{Q}(2n/\sqrt{\ul v})\leq \mathbb{P}[\rho(\widetilde
  A_\text{cl})\geq 1]$. The Theorem follows from
  $\sqrt{\ul v}\geq
  \sqrt{|\mathrm{supp}(Z)|\min_i[\mathrm{tr}(\sigma_iBJ_i)]^2}\geq
  \sqrt{|\mathrm{supp}(Z)| n^2\gamma^2}$.
 % Once again invoking the monotonicity property
 %  of $\mathbb{Q}(\cdot)$ function and the fact that
 %  $\sqrt{\ul v}\leq \sqrt{d_1\min_i[\mathrm{tr}(\sigma_iBD_i)]^2}\leq
 %  \sqrt{d_1n^2c^2}$, the assertion of the corollary follows.
\end{proof}

Since
$2\mathbb{Q}(2/\sqrt{\gamma^2 |\mathrm{supp}(Z)|}) =
1-\mathrm{erf}(1/\sqrt{0.5\,\gamma^2 |\mathrm{supp}(Z)|})$, Theorem
\ref{cor: rate of increase} states that
$\mathbb{P}[ \rho(\wt{A}_{\mathrm{cl}})\geq 1]$ increases to one at
the rate of a Gaussian error function of order
$1/\sqrt{|\mathrm{supp}(Z)|}$. Further, the convergence rate towards
instability is independent of the dimension of the closed-loop system.

% (see also \citep{MT:96}),

To conclude this section, we discuss the performance of the
data-driven controller when the number or length of the experiments
grows and the number of perturbed entries remain bounded. In this
case, if the data-driven control algorithm $F$ depends in a
comparable way on all data points but not almost exclusively on any of
them, then the Jacobian matrices $J_i$ have decreasing norm, and the
upper bound in Theorem \ref{thm: bound_probability_gaussian} decreases
to zero. This implies that the data-driven algorithm becomes
increasingly more robust to perturbations that are bounded in variance
and support as the number of experimental data increases. To formalize
this discussion, let $\overline v$ be as in Theorem \ref{thm:
  bound_probability_gaussian}, and notice that
\begin{align}\label{eq: Jmax}
  \overline v \le \sigma_\text{max}^2  | \mathrm{supp} (Z)| 
  J_\text{max}^2
\end{align}
where $\sigma_\text{max} = \max_{i \in \mathrm{supp}(Z)} \sigma_i$ and
$J_\text{max} = \max_{i \in \mathrm{supp}(Z)} \| B J_i\|$. Then,
whenever $|\mathrm{supp} (Z)|  J_\text{max}^2$ decreases and
$\subscr{\sigma}{max}$ remains bounded, $\overline v$ converges to
zero, and Theorem \ref{thm: bound_probability_gaussian} implies that
the perturbed closed-loop system remains stable with probability
converging to one. This robustness property, which we validate in
Section \ref{sec: examples} for a class of data-driven control
algorithms, is in contrast to model-based control techniques, where
only a finite number of perturbations can in general be detected and
remedied (e.g., see \cite{SS-CH:10a,FP-FD-FB:10y}).

% We next formalize this result for a particular class of data-driven
% control algorithms, and discuss possible extensions in
% Remark~\ref{remark: relaxing assumption square root}.

% \begin{theorem}{\bf\emph{(Robustness versus dimension of the
%       data)}}\label{thm: robustness}
%   Let the data-driven control map satisfy
%   $F(U, X) = G( U, \frac{1}{\sqrt{pN}} X)$. Then, 
%   \begin{align}
%     \lim_{|\mathrm{supp}(Z)|/N \rightarrow 0} \mathbb{P}[\rho (\wAcl) \geq 1] = 0
%   \end{align}
% \end{theorem}
% \begin{proof}
%   Let $\overline v$ be as in Theorem \ref{thm:
%     bound_probability_gaussian}, and notice that
%   $\overline v \le | \mathrm{supp} (Z)| \sigma_\text{max}^2 \| B\|^2 J_\text{max}^2 $, where $\sigma_\text{max} = \max_i \sigma_i$ and
%   $J_\text{max} = \max_i \| J_i\|$. Then, whenever
%   $| \mathrm{supp} (Z)| \|J_i\|$ converges to zero for all indices $i$
%   as $N$ increases, $\overline v$ converges to zero, and the statement
%   follows from Theorem \ref{thm: bound_probability_gaussian}.
% \end{proof}

% We conclude this section with some remarks on the assumptions and
% validity of the presented results.

\begin{remark}{\bf\emph{(Tightness of the bounds)}} \label{remark:
    tightness of bounds} The bounds in \eqref{thm:
    bound_probability_gaussian} depends on the dimension of
  $A_\mathrm{cl}$. Although the lower bound ranges between $0$ and $1$,
  the upper bound can exceed $1$, as suggested by the factor $2n$
  outside the exponential function. Other factors can also deteriorate
  the upper bound; see \citep[Chapter.~4]{JAT:15} for a thorough
  discussion of the role of the dimension on probabilistic tail
  bounds. Yet, in addition to providing a qualitative understanding of
  the properties that affect closed-loop stability with perturbed
  data, the bounds in \eqref{thm: bound_probability_gaussian} remain
  useful in many cases (e.g., see Fig. \ref{fig: bounds}). \oprocend
\end{remark}

\begin{remark}{\bf\emph{(Gaussian assumption in Theorem \ref{thm:
        bound_probability_gaussian})}} The results in Theorem
  \ref{thm: bound_probability_gaussian} can be readily extended to
  different classes of stochastic perturbations. For instance, an
  upper bound similar to the one in \eqref{eq: probability_bounds} can
  be obtained for perturbations with bounded support (see Matrix
  Bernstein Inequality in \citep[Chapter.~6]{JAT:15}). Lower bounds
  can be obtained using the Paley-Zygmund or Cantelli's inequality,
  although such results would likely be loose without any further
  assumption on the perturbation. \oprocend
\end{remark}

% \begin{remark}{\bf\emph{(Relaxing assumptions in Theorem \ref{thm:
%         robustness})}}\label{remark: relaxing assumption square root}
%   We now comment on extending Theorem \ref{thm: robustness} to the
%   control mappings that necessarily does not satisfy Assumption
%   \ref{assump: lipschitz}. First, observe that the key idea we used in
%   the proof of Theorem \ref{thm: robustness} is to ensure that
%   $\|D_i\|\to 0$ as the dimension $d\to \infty$. Thus, if we can
%   upper bound $\|D_i\|$ by $C/\sqrt{d}$ for all $i \in \mathcal{I}$,
%   where $C$ is a constant that does not depend on $d$, and then we are
%   done. Hence, the right generalization of Theorem \ref{thm:
%     robustness} is to assume that the partial derivatives in the
%   Jacobian matrix are of order $1/\sqrt{d}$. Of course, the reader
%   might have already guessed that coordinate wise Lipschitz functions
%   with Lipschitz constants of order $1/\sqrt{d}$ satisfies the
%   assumption on the partial derivatives.
% \end{remark}

\begin{figure}[t]
	\centering
	\includegraphics[width=0.8\columnwidth]{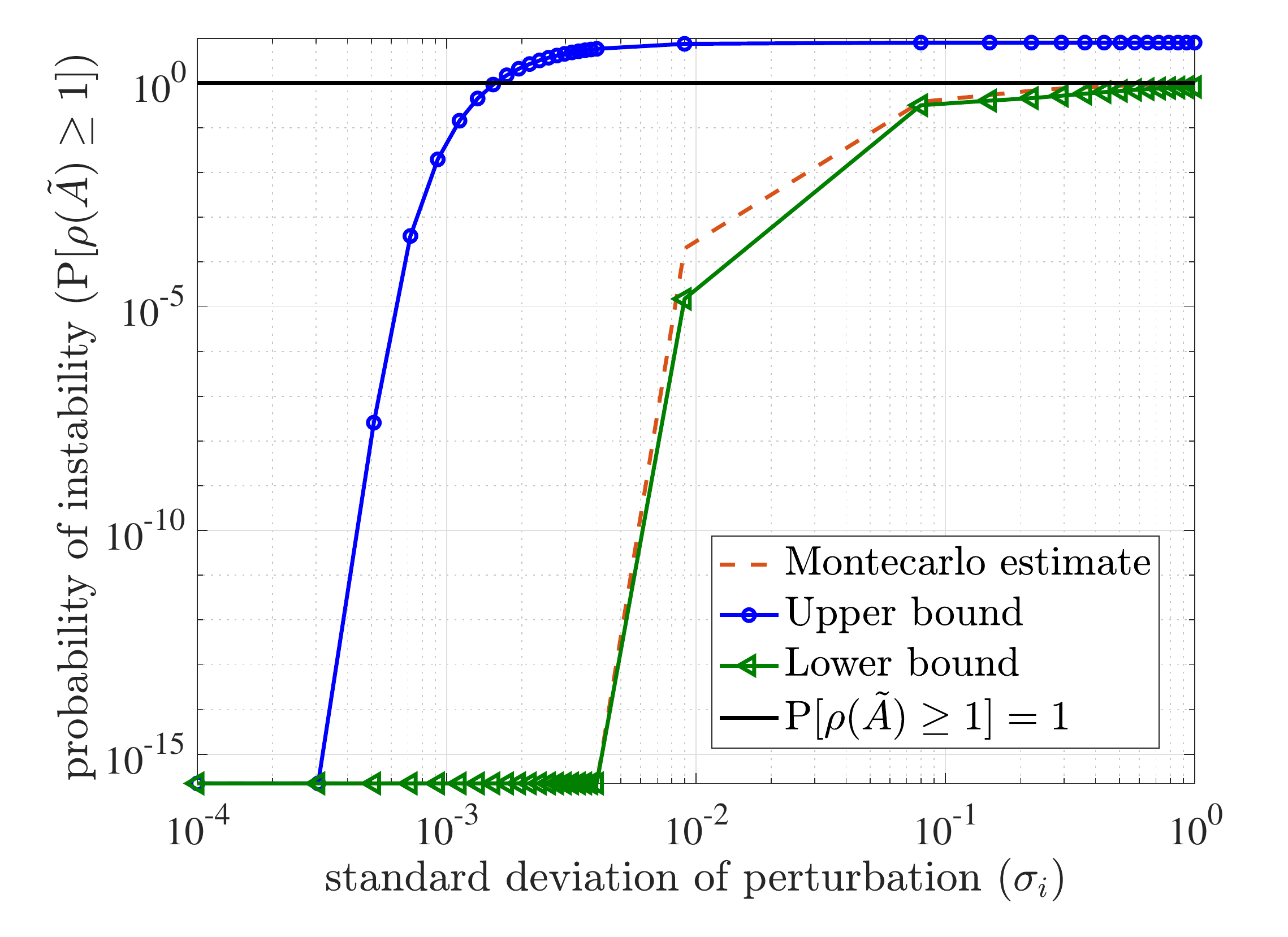}
	\caption{This figure shows the estimate of probability of
          instability (orange dashed line) of perturbed the
          closed-loop system described in \eqref{eq: example}, and the
          upper and lower bound (blue circle and green triangle,
          respectively) derived in Theorem \ref{thm:
            bound_probability_gaussian} as a function of perturbation
          variance ($\sigma_i$). Any value less than that of machine
          epsilon is rounded to that value $(2.2\mathrm{e}-16)$ Notice
          that (i) the probability of instability lies within the
          theoretical bounds derived in Theorem \ref{thm:
            bound_probability_gaussian}, (ii) the upper (resp. lower)
          bound converges to zero (resp. one) as the perturbation
          variance decreases (resp. increases) (iii) the lower bound
          is tight for all values of $\sigma_i$; however, the upper
          bound proves to be meaningful only when
          $\sigma_i \in (1\mathrm{e}-4, 1.5\mathrm{e}-3)$.}
	\label{fig: bounds}
\end{figure}

\section{Illustrative examples}\label{sec: examples}
In this section we provide examples to illustrate the bounds derived
in Theorem \ref{thm: bound_probability_gaussian}. To this aim, we
consider a simplified discrete-time linear time-invariant model of a
vehicle \citep{SD-NM-BR-VY:19}:
\begin{align}\label{eq: example}
x(t+1) =
\begin{bmatrix}
1 & T_s & 0 & 0 \\
0 & 1 & 0 & 0\\
0 & 0 & 1 & T_s\\
0 & 0 & 0 & 1
\end{bmatrix}
x(t) +
\begin{bmatrix}
0 & 0 \\
T_s & 0\\
0 & 0\\
0 & T_s
\end{bmatrix}
u(t),
\end{align}
where $x(t) \in \mathbb{R}^4$ contains the vehicle’s position and
velocity in cartesian coordinates, $u(t) \in \mathbb{R}^2$ is the
input signal, and $T_s=0.1$ is the sampling time. We assume that the
matrices in \eqref{eq: example} are unknown, and collect the state
trajectory resulting from a single control experiment with a random
control input of length $T = 500$, which implies that
$\mathrm{vec}(X)\in \mathbb{R}^{2000}$. Then, we use the data-driven
characterization provided in \cite[Theorem 3]{CDP-PT:19} as a
procedure to design a minimum-norm stabilizing controller for
\eqref{eq: example}. We let the experimental data be perturbed at $50$
random locations in $\mathrm{vec}(X)$, that is,
$|\mathrm{supp}(Z)|=50$, and compute the Montecarlo estimate of the
probability of instability (numerically, over $10000$ instances) of
the closed-loop system for different values of the variance of the
Gaussian perturbation (with zero mean). Our results are in
Fig. \ref{fig: bounds}. We remark that the Jacobian of the data-driven
control algorithm ($J_X$ in \eqref{eq: frechet_derivative}), and thus
the matrices $J_i$ in \eqref{eq: probability_bounds}, can be computed
numerically using the available training data, similarly to the
numerical computation of the derivative of a scalar function. We refer
the reader to \cite{MCS-YZH-CHW-PCW:17}.

\begin{figure}[t]
	\centering
	\includegraphics[width=0.65\columnwidth, trim = 0.1cm 0.1cm
        0.1cm 0.15cm, clip]{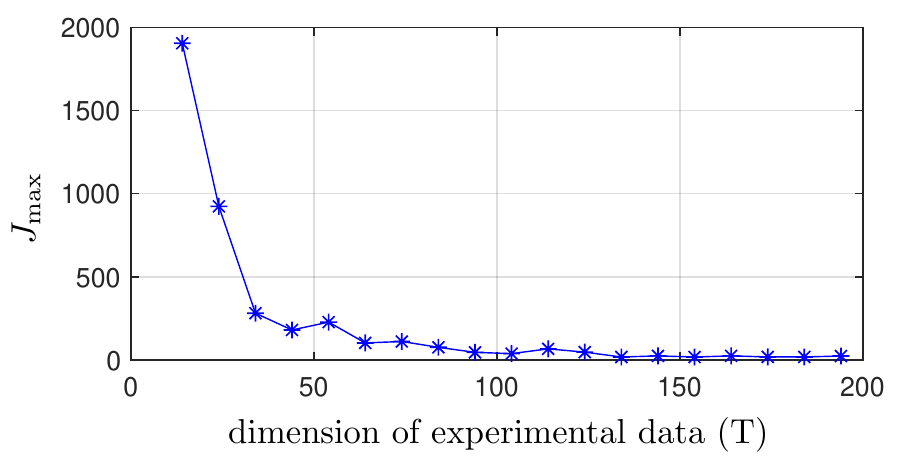}
	\caption{For the setup in Section \ref{sec: examples}, this
          figure shows the norm $\subscr{J}{max}$ in \eqref{eq: Jmax}
          as a function of the dimension of the experimental data
          (average over $15$ trials). The norm $\subscr{J}{max}$ is a decreasing
          function of the dimension of the training data, which
          ensures robustness when the dimension of the experimental
          data grows faster than the dimension of the compromised~data.}
	\label{fig: Jmax decreasing}
\end{figure}

To conclude, in Fig. \ref{fig: Jmax decreasing} we show that the
Jacobian matrix $J_X$ of the considered data-driven control algorithm
satisfies the bound in \eqref{eq: Jmax}. That is, the sensitivity of
the algorithm in \citep[Theorem 3]{CDP-PT:19} to variations of the
training data decreases with the dimension of the training data. This
ensures that localized perturbations have increasingly less effect on
the final feedback controller, and that the stability of the perturbed
closed-loop system is maintained with higher probability. We leave the
analytical characterization of this property as the subject of ongoing
and future investigation.

%\begin{align}\label{eq: example}
%  x(t+1) =
%  \begin{bmatrix}
%    1.9 & 0.5 \\
%    0.5 & 1.8
%  \end{bmatrix}
%          x(t) +
%          \begin{bmatrix}
%            1 & 0 \\
%            0 & 1
%          \end{bmatrix}
%                u(t),
%\end{align}
%and collect the state trajectory resulting from a single control
%experiment with a random control sequence of length
%$T = 50$.\fpmargin{Since the system is unstable, shouldn't we use a
%  shorter horizon?} Then, we use the data-driven characterization
%provided in \cite[Theorem 3]{CDP-PT:19} as a procedure to design a
%minimum-norm stabilizing controller for \eqref{eq: example}. We let
%the experimental data be perturbed at times $\{8,17,24,37,46\}$, and
%compute the probability (numerically, over {\color{red}$xxx$} instances) of the
%closed-loop system to remain stable for different values of the
%variance of the Gaussian perturbation (with zero mean). Our results
%are illustrated in Fig. \ref{fig: bounds}.

\section{Conclusion}
In this paper we describe a novel framework to quantify the robustness
of data-driven control algorithms for linear systems against
stochastic perturbations of the training data. We derive lower and
upper bounds for the probability of the spectral radius of the
closed-loop system exceeding one, as a function of the perturbation
statistics, sensitivity of the data-driven algorithm, and properties
of the nominal closed-loop system. We also characterize the rate at
which the probability of stability of the closed-loop system decreases
with the cardinality of the compromised data, and show that such rate
is independent of the system dimension. We discuss the qualitative
implications of our bounds, and show their effectiveness through
numerical simulations. Directions of future research include the
derivation of tighter bounds, especially upper bounds since our
estimate becomes increasingly more loose with the system dimension,
the generalization to more complex algorithms and perturbation models,
and the analysis of the sensitivity properties of different
data-driven control algorithms.

\bibliography{alias,FP,Main,New}

\end{document}